\documentclass[10pt, final, journal, letterpaper, twocolumn]{IEEEtran}
\makeatletter
\def\ps@headings{%
\def\@oddhead{\mbox{}\scriptsize\rightmark \hfil \thepage}%
\def\@evenhead{\scriptsize\thepage \hfil \leftmark\mbox{}}%
\def\@oddfoot{}%
\def\@evenfoot{}}
\makeatother \pagestyle{headings}

\IEEEoverridecommandlockouts

\usepackage{bm}
\usepackage{amsfonts}
\usepackage[dvips]{graphicx}
\usepackage{caption}
\captionsetup{font={scriptsize}}
\usepackage{subfigure}
\usepackage{times}
\usepackage{cite}
\usepackage{lettrine}
\usepackage{amsthm}
\usepackage{amsmath}
\allowdisplaybreaks[4]
\usepackage{array}
\usepackage{amssymb}

\usepackage{stfloats}
\usepackage{slashbox}
\usepackage{graphicx}
\usepackage{footnote}
\usepackage{booktabs}
\usepackage{array}
\usepackage{algorithmic}
\usepackage{algorithm}
\usepackage{subeqnarray}
\usepackage{cases}
\usepackage{threeparttable}
\usepackage{color}

\DeclareMathOperator*{\argmax}{argmax}

\newtheorem{proposition}{\underline{Proposition}}

\IEEEoverridecommandlockouts 
\begin{document}
\bibliographystyle{IEEEtran}

\title{Price-Based Distributed Offloading  for Mobile-Edge Computing with Computation Capacity Constraints}
\author{Mengyu Liu and Yuan Liu,~\IEEEmembership{Member,~IEEE}
\thanks{Manuscript received October 26, 2017; revised November 28, 2017; accepted December 1, 2017.  This paper was supported in part by the Natural Science Foundation of China under Grant 61401159 and Grant 61771203, and in part by the Pearl River Science and Technology Nova Program of Guangzhou under Grant 201710010111. The associate editor coordinating the review of this manuscript and approving it for publication was
M. Nafie.  \emph{(Corresponding author: Y. Liu)}

The authors are  with the School of Electronic and Information Engineering,
South China University of Technology, Guangzhou 510641, China (e-mail: liu.mengyu@mail.scut.edu.cn,  eeyliu@scut.edu.cn).
}
}
\maketitle

\vspace{-1.5cm}

\begin{abstract}
Mobile-edge computing (MEC) is a promising technology  to enable real-time information transmission and computing by offloading computation tasks from wireless devices to network edge.
In this study, we propose a price-based distributed method  to manage the offloaded computation tasks from users.
A Stackelberg game is  formulated to model the interaction between the edge cloud and users, where the edge cloud sets prices to maximize   its revenue  subject to its finite computation capacity, and for given prices, each user locally  makes  offloading decision to minimize its own  cost  which is defined as latency plus payment.
Depending on the edge cloud's knowledge of the network information, we develop the uniform and differentiated  pricing algorithms,  which can both be implemented in  distributed manners.
Simulation results validate the effectiveness of the proposed schemes.

\end{abstract}


\textbf{\textit{Index Terms|}}\textbf{Mobile-edge computing (MEC),  computation offloading, pricing, Stackelberg game.}

\section{Introduction}

With the increasing popularity of new mobile applications of Internet of Things (IoT), future wireless networks with billions of IoT devices are required to support ultra low-latency communication and computing.
However, the IoT devices are usually with small physical sizes and limited batteries, thus always suffer from intensive computation and high resource-consumption for real-time information processing \cite{challenge,save_cloudcomputing,liu_zhang}.

Mobile-edge computing (MEC) is a promising technology to address this problem. Unlike conventional cloud computing integrated with  remote central clouds results in long latency and fragile wireless connections, MEC migrates  intensive   computation tasks from IoT devices to the  physically proximal network edge, and provides low-latency as well as flexible computing and communication services for IoT devices. As a result, MEC is commonly  agreed as a key technology to realize  next-generation wireless networks \cite{latence_mec}.

Joint radio  and computation resource allocation for MEC  has been  recently investigated in  the literature, e.g., \cite{joint_radio_computation,5G_heterogeneous,EH_MEC,chen_game_EE,price_remote-edge,zhangjun_joint,HuangKaiBin_mobileedge,online_EH}.
In general, the MEC paradigm can be divided into two categories: binary offloading \cite{joint_radio_computation,5G_heterogeneous,EH_MEC,chen_game_EE,price_remote-edge} and partial offloading \cite{zhangjun_joint,HuangKaiBin_mobileedge,online_EH}. With binary offloading, the computation tasks at users can not be partitioned but must be executed as a whole either at users or at edge cloud. With partial offloading, the computation tasks at users can be partitioned into different parts for local computing and offloading at the same time.

Among the mentioned literature, a handful of works \cite{chen_game_EE,price_remote-edge} adopted  game theory to design distributed mechanisms  for efficient  resource allocation in MEC systems.
For example, multiuser binary offloading was considered in \cite{chen_game_EE} using a Nash game to maximize the offloaded tasks subject to the total time and energy constraints.  The work \cite{price_remote-edge} considered competition among multiple heterogeneous clouds.
In view of prior works, most of them assumed that the computation capacity of the edge cloud is infinite. However, as  MEC server is located at network edge, its computation capacity should be finite, especially in the networks with  intensive workloads. In this case, a mechanism is needed to control  users' offloaded tasks to make the network feasible. Moreover, the mechanism should provide reasonable incentives for both edge cloud and users to efficiently allocate network resources in a distributed fashion.

Motivated by the above issues, we consider an edge cloud with finite computation capacity, which is treated as a divisible resource to be sold among the users.  The interaction between the edge cloud and users is  modeled as a Stackelberg game, where the edge cloud sets prices  to maximize its revenue and each user  designs the offloading decision individually to minimize its cost that is defined as latency plus payment.
The main contributions are two-fold:
  1) We propose a new game-based distributed  MEC scheme, where  the users  compete for the edge cloud's finite computation resources via a pricing approach, which is  modeled as a Stackelberg game.
  2) The optimal uniform and differentiated pricing algorithms  are proposed, which can   be implemented with a distributed manner.

\section{System Model and Problem Formulation}

\subsection{System Model}

We consider a   MEC  system with  $K$  users and one base station (BS) that is integrated with a MEC server to execute   the offloaded data of the users. All nodes have a single antenna. The users' computation data   can be arbitrarily divided in bit-wise for  partial local computing and partial offloading.
We assume that the total bandwidth $B$ is equally divided for $K$ users such that each user can occupy a non-overlapping frequency to offload its data to the edge cloud simultaneously.  The quasi-static channel model is considered, where channels remain unchanged during each offloading period, but can vary in different offloading periods.  We also assume that the computation offloading can be completed in a period.

Let $C_k$ denote the number of CPU cycles  for computing 1-bit of input data at user $k$. It is assumed that  user $k$ has to execute $R_k$ bits of input data in total, where $0\leq \ell_k \leq R_k$ bits are offloaded to the edge cloud while the rest $(R_k-\ell_k)$ bits are computed by its local CPU. The local CPU frequency of user $k$ is denoted as $F_k$ that is measured by the number of CPU cycles per second. Then the time for local computing at user $k$ is $t_{\rm{loc},\textit{k}}=(R_k-\ell_k)C_k/F_k$.
The offloading time $t_{{\rm off},k}$ of user $k$ comprises three parts: the uplink transmission time $t_{{\rm u},k}$, the execution time at the cloud $t_{{\rm c},k}$, and the downlink feedback time $t_{{\rm d},k}$. Thus, the offloading time $t_{{\rm off},k}$ is
\begin{align}
t_{{\rm off},k} = t_{{\rm u},k} + t_{{\rm c},k} + t_{{\rm d},k}.
\end{align}
Since the local computing and offloading can be performed concurrently, the required time of user $k$ for  executing  the total $R_k$ bits data can  be expressed as
$
t_k=\max \{t_{\rm{loc},\textit{k}},t_{\rm{off},\textit{k}}\}.
$

More specially,  the data size of the computed result fed back to user $k$ is  $\alpha_k \ell_k$,   where  $\alpha_k$ $ (\alpha_k >0)$ accounts for the ratio of output to input bits offloaded to the cloud \cite{time_beta}, which depends on the applications of the users.  Then we have $t_{{\rm u},k} =  \ell_k / r_k$ and $t_{{\rm d},k} = \alpha_k \ell_k / r_{{\rm B},k}$, where $r_k = \frac{B}{K}\log_2\left(1+\frac{p_k h_k}{B/K N_0}\right)$ and $r_{{\rm B},k} = \frac{B}{K}\log_2\left(1+\frac{P_{{\rm B},k} h_k}{B/K N_0}\right)$ denote the uplink and downlink transmission rates for user $k$, respectively.  Here $N_0$ is the noise power spectrum density, $h_k$ is the channel gain between the BS and user $k$, and $P_{{\rm B},k}$ and $p_k$ are the downlink and uplink power for user $k$, respectively.
Moreover, let $f_{{\rm c},k}$  denote the computational speed of the edge cloud assigned to user $k$, then we have $t_{{\rm c},k} = \ell_k C_k / f_{{\rm c},k}$. Here we consider equal $f_{{\rm c},k}$ allocation for simplicity, i.e., $f_{{\rm c},k} = f_C/K$,  where $f_C$  is the total computational speed of the could.

We consider a practical constraint that  the edge cloud has finite computation capacity so that its CPU cycles for computing the sum received  data  in each offloading period are upper bounded by $\overline{F}$,  the  constraint can be expressed as
\begin{align} \label{finitecloud}
\sum_{k=1}^{K}\ell_k C_k \leq \overline{F}.
\end{align}
Note that $\overline{F}$ and $f_C$ represent the MEC server's computational quantity and speed for the offloaded CPU cycles, respectively.

\subsection{Stackelberg Game Formulation}

In this paper, the users consume the edge cloud's  resources to execute the computation tasks while the edge cloud has to ensure  its available  CPU cycles for computing the total offloaded data to be below the computation  capacity. Hence, to adjust the demand and supply of the computation resources, it is considered that the edge cloud prices the CPU cycles $\ell_k C_k$ of the offloaded data for each user $k$. Thus the Stackelberg game can be applied to model the interaction  between the edge cloud and  users,  where the edge cloud is the leader and the users are the followers. The edge cloud (leader) first imposes the prices for CPU cycles of users. Then, the users (followers) divide their input data for local computing and offloading  individually based on the prices announced from the edge cloud.

Denote the CPU cycle prices for  users as a set $\bm{\mu}=\{ \mu_1,\cdots,\mu_K \}$. The objective of the edge cloud is to maximize its revenue obtained from selling the finite computation  resources  to users. Mathematically, the optimization problem at the edge cloud's side can be expressed as (leader problem)
\begin{align}
\textbf{P1}: \quad \max_{\bm{\mu}\succeq 0} \quad U_B(\bm{\mu}) = \sum_{k=1}^{K} \mu_k \ell_k C_k
 \quad \rm{s.t.} \quad \eqref{finitecloud}. \nonumber
\end{align}
Note that the offloaded data $\ell_k$ for user $k$ is actually a function of $\mu_k$, since the size of  data that each user is willing to offload is dependent on its assigned price.

At the users' side,  each user's cost is defined as its latency plus the payment  charged by the edge cloud, i.e.,
\begin{align}
U_k(\ell_k,\mu_k) =  t_k + \mu_k \ell_k C_k,
\end{align}
which is equivalent to
\begin{align} \label{utilityfunction}
U_k(\ell_k,\mu_k)=
\begin{cases}
\big(\mu_k - \frac{1}{F_k}\big)\ell_k C_k + \frac{R_k C_k}{F_k},       &0\leq \ell_k \leq m_k,\\
\beta_k \ell_k + \mu_k \ell_k C_k,   &m_k < \ell_k \leq R_k,
\end{cases}
\end{align}
where $\beta_k = \frac{1}{r_k} + \frac{C_k}{f_{{\rm c},k}} + \frac{\alpha_k}{r_{{\rm B},k}}$ and $0< m_k< R_k$ is defined as $m_k = \frac{ C_k R_k }{\beta_k F_k +  C_k }$.


The goal  of each user $k$ is to minimize its own cost by choosing the optimal offloaded data size $\ell_k$ for given price $\mu_k$ set by the edge cloud. Mathematically, this problem can be expressed  as (follower problem)
\begin{align}
\textbf{P2}: \quad \min_{\ell_k} \quad U_k(\ell_k,\mu_k)
 \quad {\rm{s.t.}} \quad 0\leq \ell_k \leq R_k. \nonumber
\end{align}
It is worth noting that the payment term in Problem \textbf{P1} and Problem \textbf{P2} can be cancelled each other from the net utility perspective.  Problem \textbf{P1} and Problem \textbf{P2} in the Stackelberg game are coupled in a complicated way, i.e., the pricing strategies of the edge cloud have an influence on the offloaded data sizes of the  users which also impact the edge cloud's revenue in turn.

\section{Optimal Algorithm}

To analyze  the considered Stackelberg game, each user $k$ independently decides its offloading strategy $\ell_k^*$ by solving Problem \textbf{P2} with given  price  $\bm{\mu}$. Knowing each user's offloading decision $\ell_k^*(\mu_k)$, the edge cloud sets its optimal price $\bm{\mu}^*$ by solving Problem \textbf{P1}. The above process is known as the backward induction. In this paper, two optimal pricing strategies are considered, which are termed as uniform pricing and differentiated pricing \cite{d2d_game}. In the following, we will investigate the two pricing schemes respectively.

\subsection{Uniform Pricing}

For the uniform pricing scheme, the  edge cloud sets and broadcasts a uniform price to all  users, i.e., $\mu = \mu_1\cdots=\mu_K$. For given uniform price $\mu$, the objective function $U_k$ is a piecewise function of $\ell_k$, which is linear in each interval from  \eqref{utilityfunction}.
Then, by exploiting the structure  of  $U_k$, we can  obtain the optimal solution for Problem \textbf{P2} in the following proposition.
\begin{proposition} \label{threshold-based}
The optimal offloading decision  of each user  in Problem \textbf{P2} follows the threshold-based policy, i.e.,
\begin{align} \label{Usolution_P2}
\ell_k^* (\mu)= m_k x_k, \quad \forall k,
\end{align}
where the binary variable $x_k$ is defined as
\begin{eqnarray} \label{U_binary}
x_k=
\begin{cases}
1,       &\mu \leq \frac{1}{F_k},\\
0,   & \mbox{otherwise}.
\end{cases}
\end{eqnarray}
\end{proposition}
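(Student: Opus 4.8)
The plan is to exploit the piecewise-linear structure of $U_k$ in $\ell_k$, minimize each linear piece separately, and then glue the two pieces together. First I would read off the slope of each branch of \eqref{utilityfunction}. On the upper branch $m_k < \ell_k \le R_k$ the slope with respect to $\ell_k$ is $\beta_k + \mu C_k$, which is strictly positive because $\beta_k>0$, $C_k>0$ and $\mu \ge 0$; hence $U_k$ is strictly increasing on this branch and its minimum over $(m_k, R_k]$ is attained as $\ell_k \downarrow m_k$. On the lower branch $0 \le \ell_k \le m_k$ the slope is $\big(\mu - \tfrac{1}{F_k}\big)C_k$, whose sign is governed entirely by the comparison between $\mu$ and $1/F_k$.

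Next I would split into the two cases dictated by this sign. When $\mu \le 1/F_k$ the lower-branch slope is nonpositive, so $U_k$ is nonincreasing on $[0,m_k]$ and is minimized at $\ell_k = m_k$; combined with the increasing upper branch, the global minimizer over $[0,R_k]$ is $\ell_k^\ast = m_k$, i.e. $x_k = 1$. When $\mu > 1/F_k$ the lower-branch slope is strictly positive, so $U_k$ is increasing on $[0,m_k]$ as well; since both branches are then increasing, the global minimum is the left endpoint $\ell_k^\ast = 0$, i.e. $x_k = 0$. This reproduces \eqref{Usolution_P2}--\eqref{U_binary}.

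One technical point I would verify en route is that the two branches agree at the breakpoint $\ell_k = m_k$, so that $U_k$ is continuous there and the case-by-case endpoint comparisons are legitimate; substituting $m_k = C_k R_k / (\beta_k F_k + C_k)$ into both expressions shows they coincide, which also confirms that $m_k$ is precisely the crossover at which the effective slope switches from $\big(\mu - \tfrac{1}{F_k}\big)C_k$ to $\beta_k + \mu C_k$. This continuity is what lets me conclude the global minimum lies at one of the two endpoints $\{0, m_k\}$ rather than somewhere the function could jump.

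The step I expect to require the most care is the boundary case $\mu = 1/F_k$: here the lower branch is flat, so every $\ell_k \in [0, m_k]$ is a minimizer and the solution is non-unique. I would note that the threshold policy simply selects the representative $\ell_k^\ast = m_k$ among these optima, consistent with the convention $\mu \le 1/F_k \Rightarrow x_k = 1$; this is the natural choice since it offloads the maximal amount without increasing the user's cost, and it keeps $\ell_k^\ast(\mu)$ well defined as the function fed into Problem \textbf{P1}.
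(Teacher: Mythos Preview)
Your proposal is correct and follows essentially the same approach as the paper: analyze the two linear pieces of $U_k(\ell_k,\mu)$, observe that the upper branch is always increasing while the lower branch changes sign at $\mu = 1/F_k$, and handle the boundary case $\mu = 1/F_k$ by selecting $\ell_k^\ast = m_k$ among the continuum of minimizers. The paper's own proof is far terser---it simply states the three-case solution \eqref{U_l} and dismisses the tie $\mu = 1/F_k$ as a probability-zero event---so your version spells out the slope comparisons and the continuity check at $m_k$ that the paper leaves implicit.
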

\begin{proof}
Please refer to Appendix A.
\end{proof}

From Proposition \ref{threshold-based}, we obtain that the offloading threshold is $1/\mu$, i.e., user $k$ prefers to offload $m_k$ bits to the edge cloud if its CPU frequency $F_k$ is smaller than or equal to the threshold, and leaves all bits for local computing otherwise.
In other words, the computation offloading is beneficial if the user has small computational speed $F_k$ and it is likely to compute locally otherwise.
%

Then we turn our attention to Problem \textbf{P1}. By substituting \eqref{Usolution_P2} into Problem \textbf{P1}, the optimization problem at the edge cloud for the uniform pricing scheme can be rewritten  as
\begin{align}
\textbf{P3}: \quad \max_{\mu \geq 0} &\quad U_B(\mu) = \mu \sum_{k=1}^{K}  m_k x_k C_k \label{revenue_U}\\
 \rm{s.t.} &\quad \sum_{k=1}^{K} m_k x_k C_k \leq \overline{F}. \label{U_cloud_capacity}
\end{align}
%
%
\begin{proposition} \label{uniform_finite_set}
Without loss of generality, we sort  $1/F_1<\cdots<1/F_{K-1}<1/F_K$,  the optimal uniform price $\mu^*$ must belong to the set
$\{1/F_1,\cdots,1/F_{K-1},1/F_K\}. $
\end{proposition}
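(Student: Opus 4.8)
The plan is to exploit the threshold structure from Proposition \ref{threshold-based} to show that the revenue $U_B(\mu)$ in \textbf{P3} is a piecewise-linear function of $\mu$ whose only breakpoints are the sorted thresholds $1/F_1<\cdots<1/F_K$, and that $U_B$ is strictly increasing on each piece. This structural fact forces the maximizer to land on one of those breakpoints rather than in the interior of a piece.

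First I would partition the price axis into the half-open intervals $(1/F_{j-1},\,1/F_j]$ for $j=1,\ldots,K$ (with the convention $1/F_0:=0$), together with the tail $(1/F_K,\infty)$. By the threshold rule \eqref{U_binary}, $x_k=1$ holds exactly when $\mu\leq 1/F_k$; since the thresholds are sorted, the set of active users $\{k:x_k=1\}$ is constant on each interval, equal to $\{j,j+1,\ldots,K\}$ on $(1/F_{j-1},\,1/F_j]$, and it loses exactly one user each time $\mu$ crosses a breakpoint from below. Consequently the aggregate offloaded cycles $S(\mu):=\sum_{k=1}^{K} m_k x_k C_k$ are a nonincreasing step function, constant on each interval.

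Next, on any single interval $S(\mu)$ equals a constant $S_j\geq 0$, so $U_B(\mu)=\mu S_j$ is linear and strictly increasing in $\mu$ whenever $S_j>0$. Hence the revenue over that interval is maximized at its right endpoint $\mu=1/F_j$, which—because the inequality in \eqref{U_binary} is $\mu\leq 1/F_k$—is included in the interval and so attains the value. On the tail $(1/F_K,\infty)$ every user computes locally, giving $S=0$ and zero revenue, so the tail is dominated by any feasible breakpoint (the breakpoint $\mu=1/F_K$ alone already yields the positive revenue $m_K C_K/F_K$ whenever user $K$ can be served) and can never be optimal.

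Finally I would fold in the capacity constraint \eqref{U_cloud_capacity}. The key observation, and the main point requiring care, is the interaction between the constraint and this piecewise structure: because $S(\mu)$ is \emph{constant} on each interval, feasibility is all-or-nothing there, so each interval is either entirely feasible or entirely infeasible and the constraint never truncates a piece in its interior. Since $S(\mu)$ is nonincreasing, the infeasible intervals (if any) form a leftmost block, and the feasible set is a union of whole pieces. Combining this with the per-interval analysis, the maximum of $U_B$ over the feasible set is attained at one of the right endpoints $1/F_1,\ldots,1/F_K$, which establishes the claim.
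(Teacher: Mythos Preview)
Your argument is correct and is essentially the same as the paper's: both rest on the observation that the active user set, and hence $\sum_k m_k x_k C_k$, is constant on each interval between consecutive thresholds, so that $U_B(\mu)$ is linear and increasing there and the optimum must sit at a right endpoint $1/F_j$. The paper phrases this as a contradiction (pushing any interior $\mu^*\in(1/F_i,1/F_{i+1})$ up to $1/F_{i+1}$), whereas you give the direct piecewise-linear analysis and are more explicit about the tail $(1/F_K,\infty)$ and about why the capacity constraint cannot cut a piece in its interior; these are presentational refinements rather than a different route.
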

\begin{proof}
Please refer to Appendix B.
\end{proof}


According to Proposition \ref{uniform_finite_set}, the revenue maximization problem in  \textbf{P3} reduces to the one-dimensional search problem over $K$ values in $\{1/F_k\}_{k=1}^K$ and we summarize the whole method in Algorithm 1  formally.
Specially, the edge cloud bargains with the users  by announcing price   in the decreasing order of $\{1/F_k\}_{k=1}^{K}$. Since the required sum CPU cycles  $\sum_{k=1}^{K}\ell_k C_k$ decreases with the price $\mu$, the price bargaining ends as long as  the computation  capacity constraint \eqref{U_cloud_capacity}  is active  and there is no need for bargaining the rest of price candidates.

It is obvious that  the total  complexity of Algorithm 1 to search $\mu^*$ is $\mathcal{O}(\log K)$. For the uniform pricing scheme, the edge cloud needs the limited network information, i.e., $F_k$ and $C_k$, which are collected by the cloud before the algorithm. In each iteration, after knowing the price $\mu$ broadcasted by the edge cloud, each user independently makes its offloading decision $\ell_k$ and reports it to the edge cloud for updating the price. Therefore, the cloud broadcasts the price $\mu$ and each user reports its offloading decision $\ell_k$, which are the information exchanged between the edge cloud and the users in each iteration. Hence, Algorithm 1 is a fully distributed algorithm.

\begin{algorithm}[tb]
\caption{Optimal Uniform Pricing Policy for Problem \textbf{P3}}
\begin{algorithmic}[1]
\STATE The edge cloud initializes $\tau = K$ and $\mu^{\tau}=1/F_{\tau}$.
\REPEAT
\STATE Every user decides its optimal offloaded data size $\ell_k^*(\mu^{\tau})$ according to \eqref{Usolution_P2}.
\STATE The edge cloud  computes its revenue $U_B(\mu^{\tau})$ from \eqref{revenue_U}.
\IF{$\sum_{k=1}^{K}\ell_k^*(\mu^{\tau})C_k \leq \overline{F}$}
\STATE Update the price $\mu^{\tau-1}=1/F_{\tau-1}$, and $\tau\leftarrow \tau - 1$;
\ELSE
\STATE Set $U_B(\mu^{\tau})=0$; \textbf{beak};
\ENDIF
\UNTIL $\tau \leq 0$.
\STATE Output $\mu^*\leftarrow\argmax_{\mu^{\tau}}U_B(\mu^{\tau})$.
\end{algorithmic}
\end{algorithm}

\subsection{Differentiated Pricing}

Here, we consider the general case where the edge cloud charges the different users with different prices. Similar to the uniform pricing case, the optimal solution for Problem \textbf{P2} is also \eqref{Usolution_P2}, except that $\mu$ in $x_k$ is replaced by  $\mu_k$.
And, Problem \textbf{P1} can be rewritten as
\begin{align}
\textbf{P4}: \quad \max_{\bm{\mu} \succeq 0} \quad U_B(\bm{\mu}) = \sum_{k=1}^{K} \mu_k m_k x_k C_k
\quad \rm{s.t.} \quad \eqref{U_cloud_capacity}.\nonumber
\end{align}
%

It is worth noting that the price $\mu_k$ is actually a function of $x_k$ for user $k$. Specifically,   $x_k=1$, i.e., $\mu_k\leq 1/F_k$ and the optimal price for user $k$ is thus given by $\mu_k^*=1/F_k$ as the objective function of Problem \textbf{P4}  is an increasing function of $\mu_k$. When $x_k=0$, the edge cloud sets the price for user $k$ as $\mu^*_k=\infty$ and earns no revenue. Based on the above analysis, Problem \textbf{P4} is thus equivalent to
\begin{align}
\textbf{P4$'$}: \quad \max_{x_k \in \{0,1\}} \quad U_B(x_k) = \sum_{k=1}^{K} \frac{ m_k x_k C_k}{F_k}
 \quad \rm{s.t.} \quad \eqref{U_cloud_capacity}. \nonumber
\end{align}
Problem \textbf{P4$'$} is actually a \emph{binary knapsack problem}
with the weight $m_k C_k$ and the value $m_k C_k/F_k$ for user $k$.
Since the problem is  NP-complete, there is no efficient algorithm solving it optimally. However, we can apply dynamic programming \cite{DY} to solve the above binary knapsack problem in pseudo-polynomial time.\footnote{
We adopt kp01$(\cdot)$ software package  in MATLAB.}

Each user $k$ needs to report  $m_k,$ $C_k,$ and $F_k$ to the edge cloud for solving Problem \textbf{P4$'$} and there is no need for iteration between the edge cloud and users.
Obtaining the optimal price $\mu_k^*$, user $k$ decides its optimal strategy   based on \eqref{Usolution_P2}. Therefore,  the differentiated pricing scheme  is also a distributed algorithm, but it needs more information and higher complexity than that of the uniform pricing scheme.

\section{Numerical Results}


In the simulation setup, we assume that the total channel bandwidth $B$ is 1 MHz and the noise power spectrum density $N_0$ is $-174$ dBm/Hz.
Each $h_k$ is assumed to be uniformly distributed in $[-50,-30]$ dBm.
The local CPU frequency $F_k$ for each user $k$ is uniformly selected from the set $\{0.1,0.2,\cdots,1\}$ GHz, and the required number of CPU cycles per bit and  the data size for user $k$ are  uniformly  distributed  with $C_k\in[500,1500]$ cycles/bit and $R_k\in [100,500]$ KB, respectively.
Unless otherwise noted, the remaining parameters are set as follows: $p_k = 0.1$ W, $P_{{\rm B},k} = 1$ W, $\overline{F}=6\times10^9$ cycles/slot,  $\alpha_k = 0.2$, and $f_C=100$ GHz.

The average performance of the two proposed pricing schemes is evaluated and compared in terms of average latency and revenue. Besides, we consider the scheme where all input data is computed locally at users for comparison.
In Fig. \ref{big:figure}(a),  both latency and revenue performance become better as the computation capacity increases, while the scheme of only local computing has the worst latency performance and is not related with the  computation capacity.
In addition, the differentiated pricing scheme has  better performance in both  latency and revenue, which shows that it is more accurate to allocate resource. Thus there exists a tradeoff between performance and complexity for the two pricing schemes.

Fig. \ref{big:figure}(b) illustrates the effect of the number of users on the average latency and  revenue, and we have the similar observations as Fig. \ref{big:figure}(a) for the three schemes. Besides, with the increasing number of users, the allocated spectrum for each user decreases, resulting in lower transmission rate and thus higher latency.  Moreover, it is expected that competition with more users forces up the prices and revenue of the edge cloud.

\begin{figure}
\centering
\subfigure[ $K=30$]{
\begin{minipage}[b]{0.22\textwidth}
\includegraphics[width=1\textwidth]{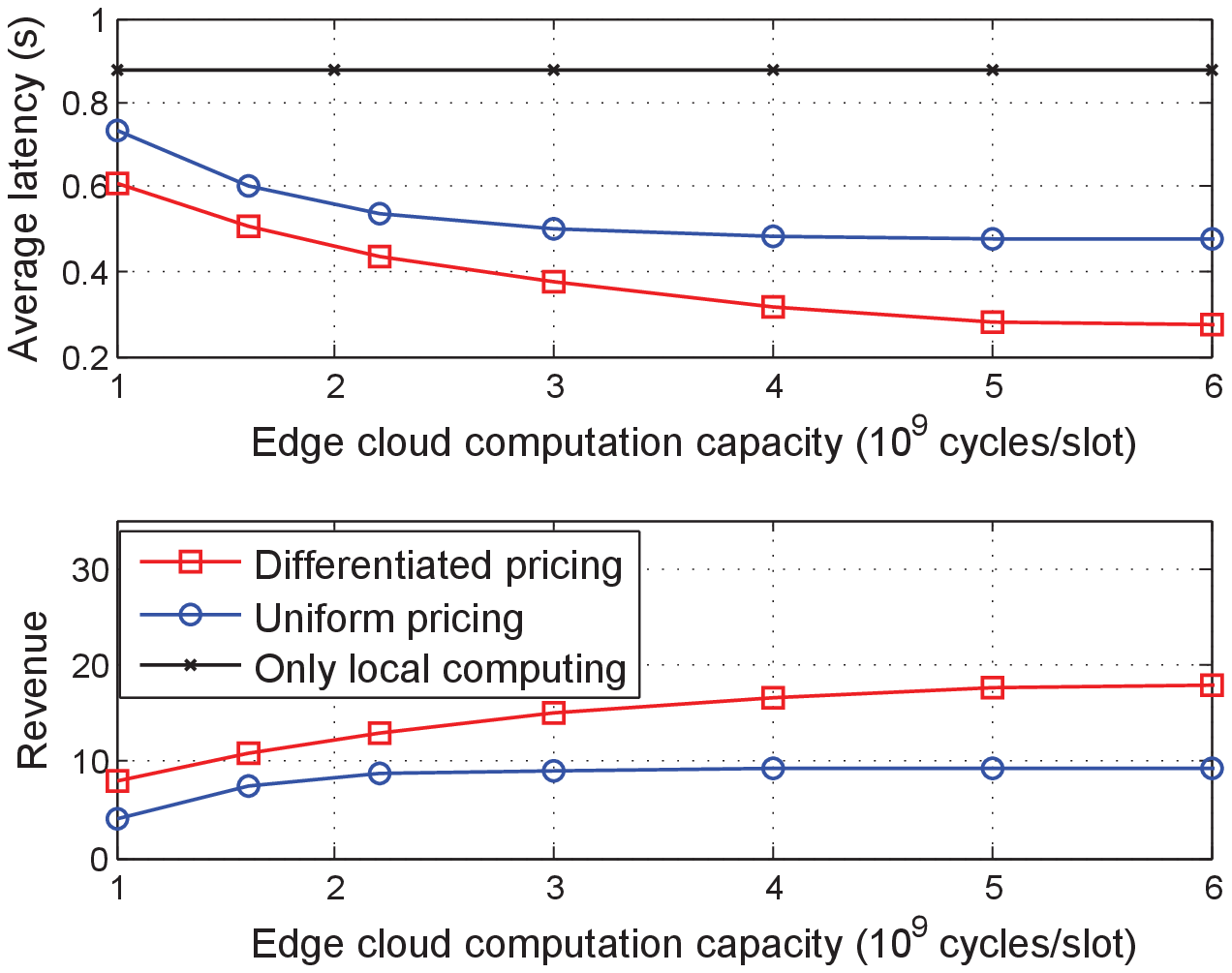}
\end{minipage}
}
\subfigure[ $\overline{F}=6\times10^9$ cycles/slot]{
\begin{minipage}[b]{0.22\textwidth}
\includegraphics[width=1\textwidth]{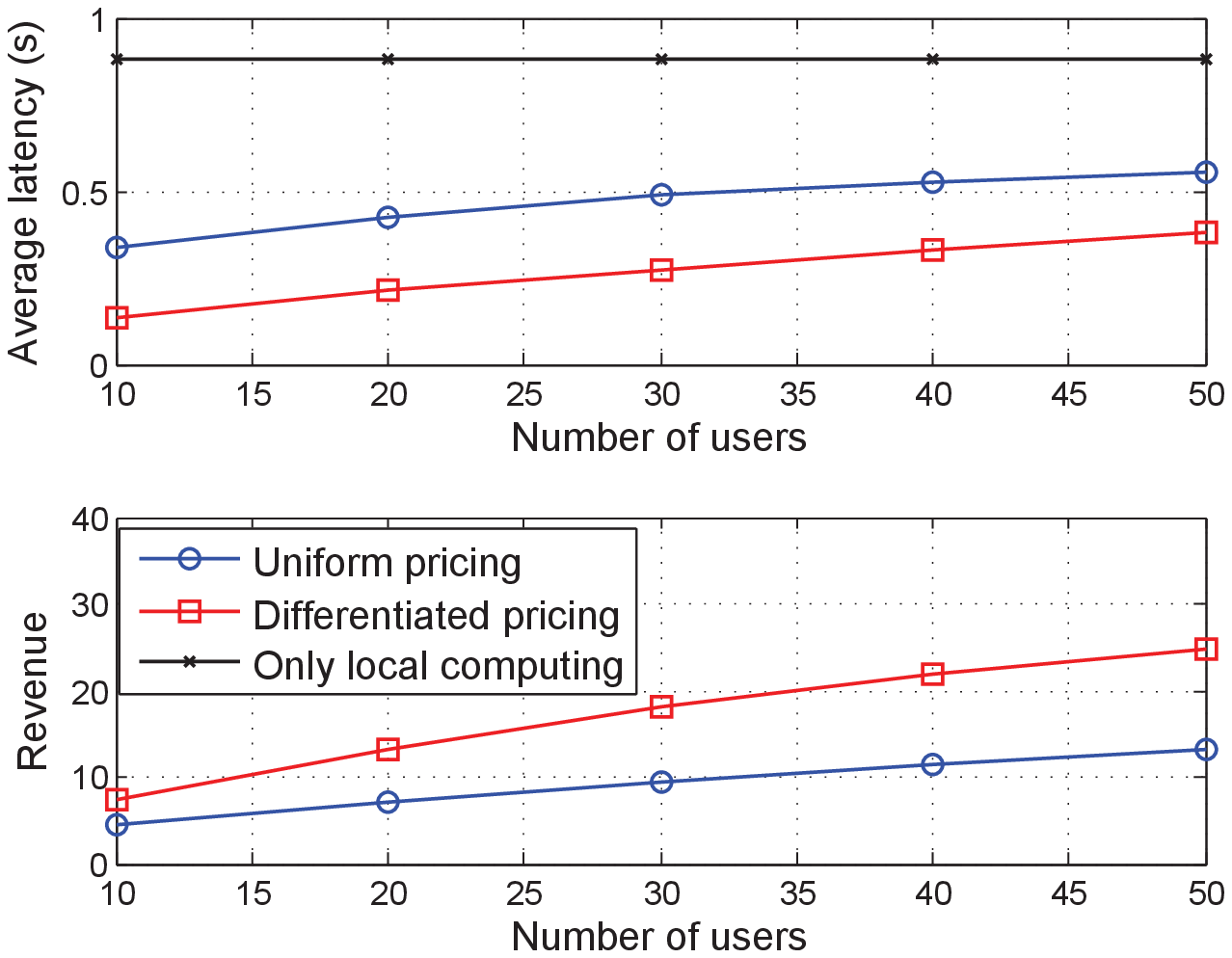}
\end{minipage}
}
\caption{Performance comparison.} \label{big:figure}
\end{figure}

\section{Conclusion}

In this work, we investigated the price-based computation offloading  for a multiuser MEC system. The finite computation capacity of the edge cloud was considered to manage the offloaded tasks from users. The Stackelberg game was applied to model the interaction between the edge cloud and  users.
Based on the edge cloud's knowledge of the network information, we proposed  uniform and differentiated pricing schemes, which can both be implemented with distributed manners.

\appendices

\section{The proof of Proposition \ref{threshold-based}}

For  given  $\mu$, the optimal solution for Problem \textbf{P2} is
%
\begin{eqnarray} \label{U_l}
\ell_k^*(\mu)
\begin{cases}
=m_k,       &\mu < \frac{1}{F_k},\\
\in[0,m_k],       &\mu = \frac{1}{F_k},\\
=0,   & \mbox{otherwise}.
\end{cases}
\end{eqnarray}
%
The  case $\mu = 1/F_k$ for all $k$ occurs with probability $0$ and we let $\ell_k=m_k$ in this case,  which completes the proof.

\section{The proof of Proposition \ref{uniform_finite_set}}

It can be proved by contradiction as follows.
Suppose that the optimal price $\mu^*$ can exist in the interval  $(1/F_i , 1/F_{i+1})$,  $\forall i \in \{1,\cdots,K-1\}$. Then, we consider the case $\widetilde{\mu}=1/F_{i+1}$.
From \eqref{U_binary}, we can obtain that $x_k=0$ with $k=1,\cdots,i$ and  $x_k=1$ with $k=i+1,\cdots,K$ for both $\widetilde{\mu}$ and $\mu^*$. Therefore, the CPU cycles of the sum offloaded data $\sum_{k=1}^{K}m_k x_k C_k$ for $\widetilde{\mu}$ are equivalent to that for $\mu^*$. As the objective function $U_B(\mu)$ given in \eqref{revenue_U} is an increasing linear function of the price $\mu$, we can always have that the case $\widetilde{\mu}=1/F_{i+1}$ achieves a higher revenue than the case $\mu^*$. Thus, this contradicts with the assumption that $\mu^*$ is optimal for Problem \textbf{P3} with $1/F_i < \mu^* < 1/F_{i+1}$. Therefore, the optimal price $\mu^*$ must exist in the set $\{1/F_1,...,1/F_{K-1},1/F_K\}$.

\bibliography{Liu_WCL2017_1113.R1}
\end{document}